\documentclass[pdflatex,sn-mathphys-num]{sn-jnl}

\usepackage[utf8]{inputenc}

\usepackage{graphicx}%
\usepackage{multirow}%
\usepackage{amsmath,amsthm,amssymb,amsfonts}%
\usepackage{amsthm}%
\usepackage{mathrsfs}%
\usepackage[title]{appendix}%
\usepackage{xcolor}%
\usepackage{textcomp}%
\usepackage{manyfoot}%
\usepackage{booktabs}%
\usepackage{algorithm}%
\usepackage{algorithmicx}%
\usepackage{algpseudocode}%
\usepackage{listings}%
\usepackage{mdframed}
\usepackage{adjustbox}
\usepackage{etoolbox}
\usepackage{subcaption}

\allowdisplaybreaks[1] 

\theoremstyle{thmstyleone}%
\newtheorem{proposition}{Proposition}
\newtheorem{conjecture}{Conjecture}

\theoremstyle{thmstyletwo}%

\theoremstyle{thmstylethree}%

\begin{document}

\title[Investigation of Circular Photon Orbits in Naked Singularity Spacetimes from a Geometric Method]{Investigation of Circular Photon Orbits in Naked Singularity Spacetimes from a Geometric Method}


\author*[1]{\fnm{Qing-Rong} \sur{Xie}}\email{xieqingrong2026@163.com}

\author*[1,2]{\fnm{Chen-Kai} \sur{Qiao}}\email{chenkaiqiao@cqut.edu.cn}

\affil[1]{\orgdiv{School of Mathematical Sciences}, \orgname{Chongqing University of Technology}, 
	\orgaddress{\street{Hongguang Avenue}, 
		\city{Chongqing}, \postcode{400054}, 
		\country{China}}}

\affil[2]{\orgdiv{School of Physics and New Energy}, \orgname{Chongqing University of Technology}, 
\orgaddress{\street{Hongguang Avenue}, 
\city{Chongqing}, \postcode{400054}, 
\country{China}}}


\abstract
{
Circular photon orbits play a pivotal role in both gravitational theories and astronomical observations. However, the properties of circular photon orbits in naked singularity spacetimes remain insufficiently explored and deserve in-depth investigation. The present work is dedicated to a comprehensive study on the features of circular photon orbits in naked singularity spacetimes. Notably, a geometric approach is employed to investigate these orbits, in which the framework of optical geometry together with its intrinsic curvatures plays crucial roles. We analyze the existence of circular photon orbits through the intrinsic geodesic curvature, and we then investigate the number of stable and unstable circular orbits, as well as the topological invariant associated with circular photon orbits. By examining various classes of naked singularity spacetimes, we obtain a general conclusion regarding circular photon orbits that holds for arbitrary spherically symmetric naked singularity spacetimes: the total number of circular photon orbits is an even integer ($\textit{N} = \text{2}\textit{k}$), consisting of $\textit{k}$ stable orbits and $\textit{k}$ unstable orbits in equal proportion. A mathematical proof of this conclusion is also provided in the present work. Furthermore, a comparison with other categories of spacetimes reveals that the conclusions regarding the count of circular photon orbits in naked singularity spacetimes agree with those obtained for compact object spacetimes without naked singularities, indicating that the number of circular photon geodesics are primarily governed by the presence of event horizons rather than spacetime singularities.
}

\keywords{Circular Photon Orbit, Naked Singularity Spacetime, Optical Geometry, Geometric Curvatures}



\maketitle

\section{Introduction}\label{sec Introduction}

Circular photon orbits in gravitational fields are of great significance for theoretical physics, mathematical physics, and astronomical observations. From a theoretical perspective, circular photon orbits are related to the topology and causal structure of spacetime \cite{Cunha_2017,Cunha_2020,Wei_2020}, as well as spacetime stability \cite{Cunha_2023a}. Investigations of circular photon orbits contribute substantially to our understanding of the mathematical properties of spacetime. From an observational astronomy standpoint, circular photon orbits are intimately connected to measurable phenomena including black hole shadow imaging \cite{EHT2019,EHTb2019,EHT2022,Gralla_2019,Perlick_2022,Vagnozzi_2023,chen2023black,Wei_2013}, gravitational lensing \cite{Virbhadra_2000,Bozza_2002,Iyer_2007,Tsukamoto_2017}, and quasi-normal modes in gravitational radiation \cite{Cardoso_2009,Cardoso_2014,Yang_2012,Li_2021}.

There are many approaches to investigating circular photon orbits in gravitational fields. The standard approach relies on the effective potential of photons in curved spacetime. Owing to its long history of development, this framework is extensively adopted in textbook treatments \cite{hartle2021gravity}. Beyond this standard scheme, progress in mathematical physics has led to the development of a wealth of innovative analytical techniques for the study of circular photon orbits (as well as circular orbits of other particles), such as topological methods \cite{Cunha_2017,Cunha_2020,Cunha_2018,Wei_2020,Wei_2023,chen2026universal} and Lyapunov exponent analysis \cite{deich2024lyapunov,Cardoso_2009}. These techniques have enabled the derivation of several important theorems and conclusions concerning circular photon orbits, greatly advancing our understanding of such orbits. Notably, in recent years, the second author of this paper proposed a geometric approach, from which the circular photon orbits and their stability are determined via the geodesic curvature and Gaussian curvature of optical geometry \cite{Qiao_2022a,Qiao_2022b} \footnote{In the case of rotating spacetimes, flag curvature is additionally involved in determining this circular photon orbits \cite{Qiao_2025c} }. This method establishes a connection between circular photon orbits and the geometric properties of optical geometry, offering a novel perspective for understanding circular photon orbits from intrinsic curvatures. To date, this approach has been applied to a wide variety of gravitational systems \cite{Qiao_2025a,Qiao_2025b,Qiao_2025c,cunha2022null,Bermdez_Crdenas_2025,Bermdez_Crdenas_2025b,Gallo_2025,zhang2026gaussian,andino2026perturbative,zhang2026geometric}.

Recently, the properties of circular photon orbits in black hole spacetimes, astrophysical compact object spacetimes, and other spacetimes in gravity theories have been explored extensively \cite{cunha2017fundamental,Claudel_2001,Guo_2023,Cveti_2016,xavier2024traversable,Hod_2013,Mishra_2019,song2025existence,song2026bounds,li2025images,diazguerra2026photon,Vertogradov_2024_photonsphere,Vertogradov_2024_b,Ara_jo_Filho_2024,Heidari_Filho_2025}. Specifically, the existence of circular photon orbits, the total number of circular photon orbits, the number and distribution patterns of stable and unstable circular orbits, and the topological invariant of spacetime associated with circular photon orbits have been reported in the literature \cite{cederbaum2017,cederbaum2016,Jia_2018a,Jia_2018b,Guo_2021,Ghosh_2021,Cunha_2020,cunha2022null,Wei_2020,Qiao_2025a,Cunha_2017,Qiao_2025b}. However, in contrast to the black hole spacetimes and ultra-compact objects spacetimes, the universal features and properties of circular photon orbits in naked singularity spacetimes have not yet been clearly established. Naked singularity spacetimes constitute one of the most exotic classes of spacetimes in general relativity and other alternative gravity theories, which may have profound influences on photon orbits and causal structures \cite{christodoulou1994,joshi2020shadow,ziaie2011naked,JOSHI_2011,virbhadra2002,Wang_MingZhi_2024}. The profound influence exerted by naked singularities on circular photon orbits is still poorly understood. It is therefore highly necessary to investigate the behaviors and characteristics of circular photon orbits in naked singularity spacetimes, which constitutes the main purpose of the present work. 

In this paper, we study the existence of circular photon orbits, the total number of circular photon orbits, and the count of stable and unstable circular orbits in general naked singularity spacetimes (with spherical symmetry). Particularly, we focus on exploring how the divergent behaviors of the metric and its derivatives at naked singularities affect circular photon orbits. Concretely, we first select several representative classes of naked singularity metrics to examine the existence and number of circular photon orbits from geodesic curvature and Gaussian curvature. By comparing these results, some common features of the circular photon orbits in naked singularity spacetimes are identified, and rigorous mathematical proofs for these conclusions are further provided in our work. These conclusions obtained in this work impose essential constraints on circular photon orbits in naked singularity spacetimes. 

This paper is structured as follows. Section \ref{sec Introduction} elaborates on the research background and motivation of this work. Section \ref{sec Geometric Approach} presents the geometric approach for investigating circular photon orbits. Section \ref{sec PO Naked Singularity} explores the properties of circular photon orbits in naked singularity spacetimes. Section \ref{sec Comparison} gives a comparison of circular photon orbits in naked singularity spacetimes and those in other categories of spacetimes. Section \ref{sec Conclusion} provides the conclusions of this work and outlines prospects for future research.

\section{Geometric Approach to Circular Photon Orbits}\label{sec Geometric Approach}

This section reviews the geometric method adopted to analyze the circular photon orbits. This approach, which was first proposed by the second author of this work, has been applied to a number of studies in recent years \cite{Qiao_2022a,Qiao_2022b,Qiao_2025a,Qiao_2025b,Qiao_2025c,cunha2022null}. The main idea of this approach is to investigate the existence and distribution of circular photon orbits by means of intrinsic geometric curvatures in low-dimensional geometry. In particular, the optical geometry, which can be motivated by the generalized Fermat's principle \cite{Ye_2008_optical,Piwnik_2025_Fermat}, is convenient for realizing such a geometric method, and it serves as a low-dimensional reduction of the 4-dimensional Lorentzian spacetime. 

In mathematics, optical geometry can be constructed in several ways \cite{Gibbons_2008,Gibbons_2009,Gibbons_2009b}. One intuitive approach is to impose the null constraint $d\tau^{2}=-ds^{2}=0$ under continuous mappings of spacetime geometry
\begin{equation}
	\underbrace{ds^{2} = g_{\mu\nu}dx^{\mu}dx^{\nu}}_{\text{Spacetime Geometry}}
	\ \ \overset{d\tau^{2}=-ds^{2}=0}{\Longrightarrow} \ \ 
	\underbrace{dt^{2} = g^{\text{OP}}_{ij}dx^{i}dx^{j}}_{\text{Optical Geometry}}
	\label{optical geometry} \ \ \text{or} \ \   \underbrace{dt=\sqrt{\alpha^{\text{OP}}_{ij}dx^{i}dx^{j}}+\beta^{\text{OP}}_{i}dx^{i}}_{\text{Optical Geometry}}
\end{equation}
The features of optical geometry are determined by the symmetry of Lorentzian spacetime. In particular, for a spherically symmetric spacetime, the corresponding optical geometry gives rise to a Riemannian geometry, where $dt^{2}=g_{ij}dx^{i}dx^{j}$. On the other hand,  for a stationary axially symmetric spacetime, the corresponding optical geometry becomes the Randers-Finsler geometry with $dt=\sqrt{\alpha_{ij}dx^{i}dx^{j}}+\beta_{i}dx^{i}$. Optical geometry plays an important role in the research of gravitational deflection and gravitational lensing \cite{Gibbons_2008,Werner_2012,Ishihara_2016,Ishihara_2017,Ono_2017,Ono_2019,Li_2020_gauss,HuangYang_2024,basumallick2026weak,Ovgun_2026_gauss}. Particularly, in the spherically symmetric spacetime, the evolution of particle orbits can always be confined in the equatorial plane, which is a property that is universally held under the influences of spherical symmetry. Accordingly, a 2-dimensional optical geometry can be constructed for such cases.
\begin{equation}
	\underbrace{dt^{2} = g^{\text{OP}}_{ij}dx^{i}dx^{j}}_{\text{Optical Geometry}}
	\ \ \overset{\theta=\pi/2}{\Longrightarrow} \ \ 
	\underbrace{dt^{2}=\tilde{g}^{\text{OP-2d}}_{ij}dx^{i}dx^{j}}_{\text{Optical Geometry (Two Dimensional)}}
	\label{eq:6}
\end{equation}

When adopting this geometric method, it is found that the existence and stability of circular photon orbits can be analyzed via the geodesic curvature and Gaussian curvature in optical geometry, respectively. Both of them are intrinsic geometric quantities that are independent of local coordinate transformations. These intrinsic curvatures serve as crucial tools for determining photon orbits and their stability. First, by mapping circular photon orbits into optical geometry, these orbits maintain the properties of geodesics. Consequently, their geodesic curvature equals zero in this case \cite{Qiao_2022a,Qiao_2022b,Qiao_2025a,Qiao_2025b,Qiao_2025c}
\begin{equation} 
	\kappa_{g}(r=r_{ph})  =  0 . \label{eq:7}
\end{equation}
Secondly, the stability of a circular orbit can be determined by analyzing conjugate points in optical geometry. For stable and unstable circular photon orbits, they exhibit entirely different behaviors when undergoing slight perturbations. For a stable circular photon orbit, the perturbed orbits may form new bounded trajectories in its vicinity. In contrast, the orbits perturbed from an unstable circular photon orbit will drift far away. From the mathematical concept of conjugate points, we arrive at the vital conclusion: there are conjugate points in the stable circular orbit, while no conjugate points exist in the unstable circular orbit. The presence and absence of conjugate points provide us with a novel scheme to distinguish the stable and unstable circular photon orbits. Specifically, it is important to emphasize that the Cartan-Hadamard theorem in differential geometry and topology strongly constrains the Gaussian curvature and the existence of conjugate points \cite{berger2012differential,do2016differential}. Applying the Cartan-Hadamard theorem in optical geometry, the following Gaussian curvature condition on the stability of circular photon orbits has been derived \cite{Qiao_2022a,Qiao_2022b}.
\begin{eqnarray}
	\mathcal{K}(r) < 0 & \Rightarrow & \text{The circular photon orbits $r=r_{ph}$ is unstable}  \nonumber
	\\
	\mathcal{K}(r) > 0 & \Rightarrow & \text{The circular photon orbits $r=r_{ph}$ is stable}  \label{gauss}
\end{eqnarray}
Since the geodesic and Gaussian curvature conditions in  (\ref{eq:7}) and (\ref{gauss}) for circular photon orbits are derived purely from the geometric and topological properties of the optical geometry, which is independent of the specific metric forms of the particular gravitational systems, these conclusions can be applied to any spherically symmetric system (no matter whether the naked singularity is existed). When analyzing the properties of circular photon orbits, the geometric method based on intrinsic curvatures yields equivalent results to the conventional effective potential method \cite{Qiao_2022a,Qiao_2022b}
\begin{subequations}
\begin{eqnarray} 
	\kappa_{g}(r=r_{\text{ph}})  =  0  
	\ \ & \Leftrightarrow & \ \
	\frac{dV_{\text{eff}}(r)}{dr} \bigg|_{r=r_{\text{ph}}} = 0 ,
	\\
	\mathcal{K}(r=r_{\text{unstable}}) < 0  
	\ \ & \Leftrightarrow & \ \
	\frac{d^{2}V_{\text{eff}}(r)}{dr^{2}} \bigg|_{r=r_{\text{unstable}}} < 0 ,
	\\ 
	\mathcal{K}(r=r_{\text{stable}}) > 0
	\ \ & \Leftrightarrow & \ \
	\frac{d^{2}V_{\text{eff}}(r)}{dr^{2}} \bigg|_{r=r_{\text{stable}}} > 0 .
\end{eqnarray}
\end{subequations}

In this work, the static and spherically symmetric naked singularity spacetimes are focused. The general metric form of such kind of spacetimes can be expressed as
\begin{equation}
	ds^{2} = g_{tt}(r) dt^{2} + g_{rr}(r) dr^{2} 
	+ g_{\theta\theta}(r) d\theta^{2} 
	+ g_{\phi\phi}(r,\theta) d\phi^{2} .
	\label{eq:9}
\end{equation}
In addition, because of the spherical symmetry, the metric components $g_{\theta\theta}=r^{2}$ and $g_{\phi\phi}=r^{2}\sin^{2}\theta$ can be obtained via coordinate transformations. Accordingly, the spacetime metric can be simplified to the following form
\begin{equation}
	ds^{2} = - f(r) dt^{2} + g(r) dr^{2} 
	+ r^{2} d\theta^{2} + r^{2} \sin^{2}\theta d\phi^{2} ,
	\label{eq:10}
\end{equation}
with $f(r)=-g_{tt}(r)$ and $g(r)=g_{rr}(r)$. The corresponding optical metric of spherically symmetric spacetimes, when restricted to the equatorial plane ($\theta=\pi/2$), can be constructed as
\begin{equation}
	dt^{2} = \tilde{g}^{\text{OP-2d}}_{ij}dx^{i}dx^{j} 
	= \frac{g(r)}{f(r)} \cdot dr^{2} + \frac{r^{2}}{f(r)} \cdot d\phi^{2} . \label{eq:11}
\end{equation} 
Using the 2-dimensional optical metric, the geodesic curvature of circular photon orbits with constant radius $r$ can be calculated using the Liouville formula \cite{do2016differential}
\begin{equation}
	\kappa_{g}(r) 
	= \frac{1}{2\sqrt{\tilde{g}^{\text{OP-2d}}_{rr}}} 
	\frac{\partial \big[ \text{log}(\tilde{g}^{\text{OP-2d}}_{\phi\phi})\big]}{\partial r}
	= \frac{1}{\sqrt{f(r) \cdot g(r)}} 
	\bigg[ \frac{f(r)}{r} - \frac{1}{2} \frac{df(r)}{dr} \bigg] .
	\label{eq:12}
\end{equation}
Similarly, the Gaussian curvature of the 2-dimensional equatorial plane at the circular photon orbits position can be calculated by the following formula
\begin{eqnarray}
	\mathcal{K} 
	& = & -\frac{1}{\sqrt{\tilde{g}^{\text{OP-2d}}}} 
	\bigg[
	\frac{\partial}{\partial \phi} \bigg( \frac{1}{\sqrt{\tilde{g}^{\text{OP-2d}}_{\phi\phi}}} \frac{\partial\sqrt{\tilde{g}^{\text{OP-2d}}_{rr}}}{\partial \phi}  \bigg)
	+ \frac{\partial}{\partial r} \bigg( \frac{1}{\sqrt{\tilde{g}^{\text{OP-2d}}_{rr}}} \frac{\partial\sqrt{\tilde{g}^{\text{OP-2d}}_{\phi\phi}}}{\partial r}  \bigg)
	\bigg]  \nonumber
	\\
	& = &   \frac{1}{2r} \frac{1}{g(r)} \frac{df(r)}{dr} 
	+ \frac{1}{2r} \frac{f(r)}{[g(r)]^2} \frac{dg(r)}{dr} 
	- \frac{1}{2f(r) g(r)} \bigg[ \frac{df(r)}{dr} \bigg]^2
	- \frac{1}{4 [g(r)]^2} \frac{df(r)}{dr} \frac{dg(r)}{dr} 
	+ \frac{1}{2g(r)} \frac{d^2 f(r)}{dr^2} , 
	\ \ \ \ \ \ 
	\label{eq:13}
\end{eqnarray}
with $\tilde{g}^{\text{OP-2d}}=\text{det} \big( \tilde{g}^{\text{OP-2d}}_{ij} \big)$ being the determinant of the 2-dimensional optical geometry metric. In this work, the geodesic curvature and Gaussian curvature in equations \eqref{eq:12}--\eqref{eq:13} are fundamental geometric quantities in the exploration of circular photon orbits in naked singularity spacetimes. 

\section{Features of Circular Photon Orbits in Naked Singularity Spacetimes}\label{sec PO Naked Singularity}

In this section, we give an investigation of the features of circular photon orbits in nakedly singular spacetimes, including the existence of circular photon orbits, the number of stable and unstable circular orbits, as well as the topological charge associated with circular photon orbits. 

From the perspective of our geometric method, the circular photon orbits are closely related to geodesic curvature via $\kappa_{g}=0$. Therefore, such circular orbits exist if and only if the equation $\kappa_{g}(r) = 0$ has at least one solution. In this way, to explore the existence of circular photon orbits in spacetimes with naked singularities, we need to investigate the variation trend of geodesic curvature $\kappa_{g}(r)$ with the radial coordinate $r$, especially its behavior at the naked singularity $r \to 0$ \footnote{Since we are focused on the spherically symmetric naked singularity spacetimes, the spacetime singularity should be confined to the center point $r=0$.} and at the outer boundary of the considered spacetime region (as illustrated in Figure \ref{fig:A}). 

\begin{figure*}
	\centering
	\includegraphics[width=0.6\textwidth]{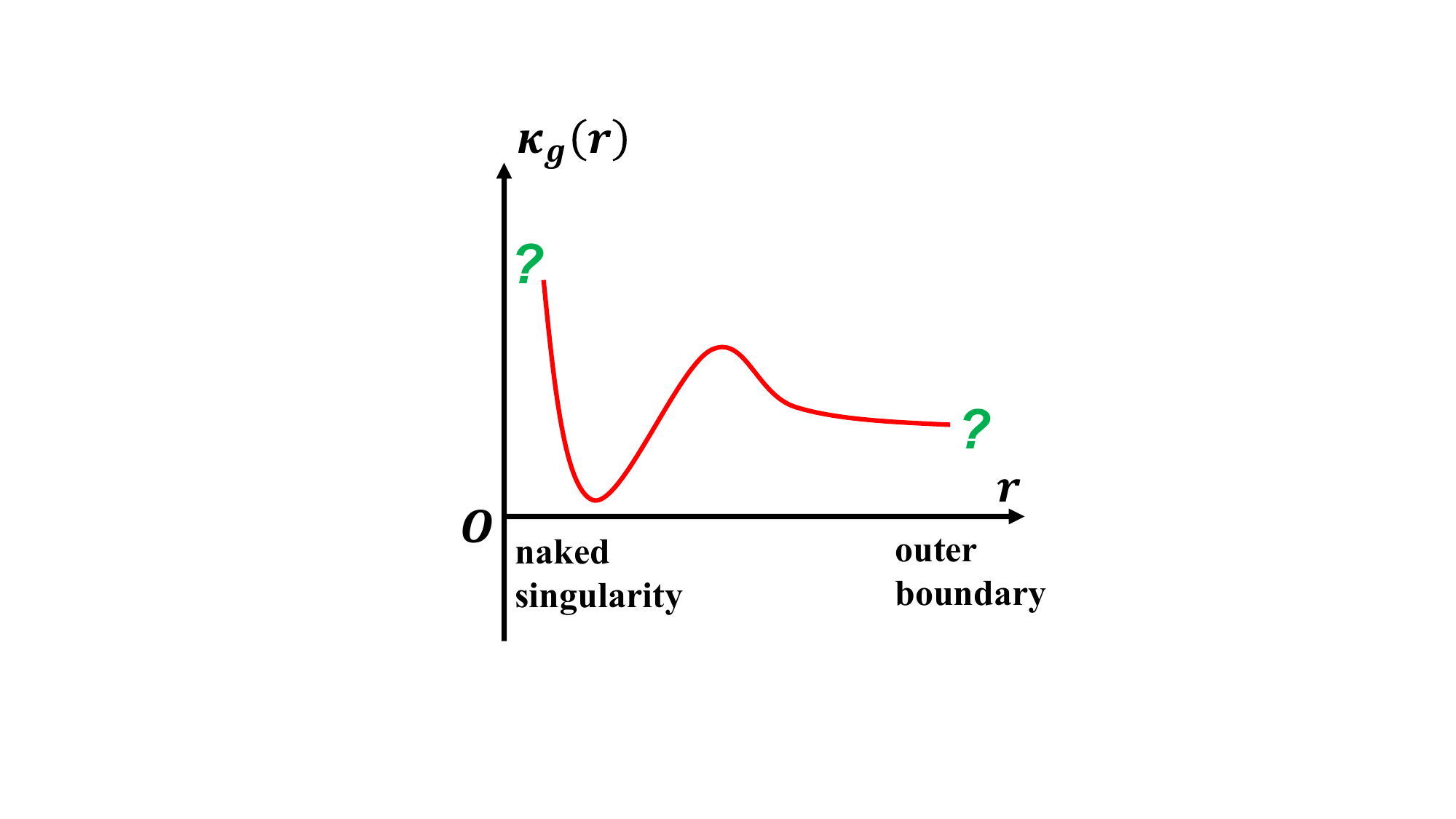}
	\caption{Illustration of the behavior of geodesic curvature $\kappa_{g}(r)$, which is used to determine the number of solutions to the geodesic constraint equation $\kappa_{g}(r) = 0$.}
	\label{fig:A}
\end{figure*}

In particular, the spacetime becomes singular at the central naked singularity, where the Riemannian curvature or other curvature invariants may become divergent (inducing the second derivatives of the metric to diverge at this point). However, the naked singularity does not impose strict constraints on the behavior of the metric functions $f(r)$, $g(r)$, or their first derivatives. In the current work, we generally assume that $f(r)>0$ and $g(r)>0$ hold everywhere for the considered spacetime region, such that the signature of Lorentzian spacetime is kept and no horizons emerge in the neighborhood of the naked singularity. From the geodesic curvature in equation (\ref{eq:12}), it turns out that the divergent behavior of the first derivative $\frac{df(r)}{dr}$ has a significant influence on the geodesic curvature in the center limit $r=0$. We can classify naked singularity spacetimes into three classes according to the divergent behavior of $\frac{df(r)}{dr}$: $\lim_{r \to 0}\frac{df(r)}{dr}= \text{finite}$, $\lim_{r\to 0}\frac{df(r)}{dr}=-\infty$, $\lim_{r\to 0}\frac{df(r)}{dr}=+\infty$. The investigations of geodesic curvature behaviors and circular photon orbits for these classes of naked singularity spacetimes are provided in subsections \ref{sec3.1} and \ref{sec3.2}, respectively. 

Meanwhile, the geodesic curvature in the outer boundary of the considered spacetime region is not affected by the naked singularity, and it depends solely on the asymptotic behavior of spacetimes. In this work, we consider the most common asymptotic behaviors in general relativity and other gravity theories, which are the asymptotically flat, asymptotically de Sitter, and anti-de Sitter spacetimes \footnote{For the asymptotically flat and asymptotically anti-de Sitter spacetimes, the outer boundary is the spatial infinity $r \to \infty$. In contract, the asymptotically de Sitter spacetimes possess a cosmological horizon $r_{\text{cosmic}}$, so the spacetime region we considered is always inside the cosmological horizon. Hence, the outer boundary for asymptotically de Sitter spacetimes is at the cosmological horizon $r \to r_{\text{cosmic}}$.}. The geodesic curvature at the outer boundary of these spacetimes is discussed in Appendix \ref{appendix A}. 

\subsection{Naked singularity spacetimes with $\lim_{r\to 0}\frac{df(r)}{dr} =\text{finite}$ or $\lim_{r\to 0}\frac{df(r)}{dr}=-\infty$}\label{sec3.1}

In this subsection, we consider two cases in which the metric derivative gets a finite value or diverges to negative infinity at the naked singularity  ($\lim_{r\to 0}\frac{df(r)}{dr} =\text{finite}$ or $\lim_{r\to 0}\frac{df(r)}{dr}=-\infty$). In such cases, the geodesic curvature $\kappa_{g}(r)$ in the central limit becomes
\begin{eqnarray}
 	\lim_{r\to 0}\frac{df(r)}{dr} = \text{finite}  \nonumber
 	\ \ \Rightarrow \ \ 
 	\lim_{r\to 0}\kappa_{g}(r) &=& \lim_{r\to 0}\left[\frac{1}{\sqrt{f(r)\cdot g(r)}}\left(\frac{f(r)}{r}-\frac{1}{2}\frac{df(r)}{dr}\right)\right]\notag
 	\\
 	&=& \lim_{r\to 0}\left[\frac{1}{\sqrt{f(r)\cdot g(r)}}\left(+\infty-\text{finite}\right)\right] \nonumber
 	\\
 	&=& +\infty	,
    \\
 	\lim_{r\to 0}\frac{df(r)}{dr} = -\infty  \nonumber
 	\ \ \Rightarrow \ \ 
 	\lim_{r\to 0}\kappa_{g}(r) &=& \lim_{r\to 0}\left[\frac{1}{\sqrt{f(r)\cdot g(r)}}\left(\frac{f(r)}{r}-\frac{1}{2}\frac{df(r)}{dr}\right)\right]\notag
 	\\
 	&=& \lim_{r\to 0}\left[\frac{1}{\sqrt{f(r)\cdot g(r)}}\left(+\infty-(-\infty)\right)\right] \nonumber
 	\\
 	&=& +\infty	.
\end{eqnarray}
It should be noted that in a naked singularity spacetime, the metric signature is preserved (namely $f(r)>0$ and $g(r)>0$) to prevent the emergence of event horizons enclosing the singularity, thereby ensuring that the $f(r) \cdot g(r)$ does not vanish at the naked singularity $r=0$ \footnote{Furthermore, we also require that $f(r) \cdot g(r)$ does not diverge to the positive infinity such that the spacetime volume $dV = \sqrt{|\text{det}(g_{\mu\nu})|} d^4 x$ remains well-defined and non-divergent as $r$ approaches the naked singularity $r \to 0$}. Provided that the geodesic curvature in the central limit tends to positive infinity ($\lim_{r\to 0}\kappa_{g}(r)=+\infty$), combined with the condition that the geodesic curvature satisfies $\kappa_{g}(r)\to 0^+$ or $\kappa_{g}(r)>0$ at the outer boundary (see Appendix \ref{appendix A}), the constraint equation for geodesic curvature $\kappa_{g}(r)=0$ either admits no solutions or possesses an even number of solutions under this circumstance. Consequently, for naked singularity spacetimes with $\lim_{r\to 0}\frac{df(r)}{dr}=\text{finite}$ or $\lim_{r \to 0}\frac{df(r)}{dr}=-\infty$, the existence of circular photon orbits is determined, and the number of circular photon orbits is $N=2k$ (where $k \in \mathbb{N}$ is a natural integer). Furthermore, from the alternating distribution property of stable and unstable circular photon orbits (which is introduced in Appendix \ref{appendix B}), the numbers of stable and unstable circular photon orbits are both equal to $k$. The topological invariant associated with circular photon orbits becomes $w = n_{\text{stable}} - n_{\text{unstable}} = 0$ \footnote{According to the topological method on circular photon orbits in reference \cite{Wei_2020,Cunha_2020}, each unstable circular photon orbit (also called the standard circular photon orbit) can be assigned a topological charge of $w=-1$, while each stable circular photon orbit (also called the exotic circular photon orbit) can be assigned a topological charge of $w=1$. The sum of topological charge over all circular photon orbits yields a topological invariant of the spacetime. Therefore, given the number of stable and unstable circular photon orbits, the topological charge / topological invariant associated with circular photon orbits in this spacetime is $w = n_{\text{stable}} - n_{\text{unstable}}$.}.

\subsection{Naked Singularity spacetimes with $\lim_{r\to 0}\frac{df(r)}{dr}=+\infty$}\label{sec3.2}

In this subsection, we provide a discussion regarding the metric derivative diverging to positive infinity as $r$ approaches to the central naked singularity ($\lim_{r\to 0}\frac{df(r)}{dr}=+\infty$). We assume that $f(r)>0, g(r)>0$ such that $f(r) \cdot g(r)$ does not vanish throughout the considered spacetime region. This assumption ensures that the Lorentzian signature of spacetime is kept, and no coordinate singularities (which may bring about event horizons or infinite redshift surfaces) exist in the spacetime. It can be clearly seen that the divergent behavior of the metric derivative $\frac{df(r)}{dr}$ exerts dramatic influences on the computation of geodesic curvature $\kappa_g(r)$ at the naked singularity, which makes it slightly different from those presented in subsection \ref{sec3.1}. Unfortunately, the direct evaluation of the geodesic curvature leads to an indeterminate result in the limit of $r \to 0$
\begin{eqnarray}
	\lim_{r\to 0}\frac{df(r)}{dr} = +\infty  \nonumber
	\ \ \Rightarrow \ \ 
	\lim_{r\to 0}\kappa_{g}(r) &=& \lim_{r\to 0}\left[\frac{1}{\sqrt{f(r)\cdot g(r)}}\left(\frac{f(r)}{r}-\frac{1}{2}\frac{df(r)}{dr}\right)\right]\notag
	\\
	&=& \lim_{r\to 0}\left[\frac{1}{\sqrt{f(r)\cdot g(r)}}\left(+\infty-(+\infty)\right)\right] \nonumber
	\\
	&=& \text{indefinite}		
\end{eqnarray} 
Therefore, to acquire definitive conclusions, it is necessary to specify typical analytical functions $f(r)$ and discuss whether the geodesic curvature constraint equation $\kappa_g(r)=0$ possesses solutions (i.e., whether circular photon orbits exist) in such cases. In the following part, we begin with several typical metric functions $f(r)$, which frequently appear in gravity theories, to derive results concerning circular photon orbits, and then summarize the common features and conclusions shared by all cases. Based on these commonalities, we finally propose a conjecture on circular photon orbits that holds universally for naked singularity spacetime with $\lim_{r\to 0}\frac{df(r)}{dr} = +\infty$.

\textbf{Case I: Spacetime with $f(r)=ar^{\alpha}+f_{\text{regular}}$ behavior near the naked singularity.} When the metric function behaves as $f(r)=ar^{\alpha}+f_{\text{regular}}$ (where $a>0$, $0<\alpha<1$, $f_{\text{regular}}$ is a non-divergent regular term satisfying $\frac{df_{\text{regular}}}{dr} = \text{finite}$). In the naked singularity spacetimes without the event horizons, the metric must satisfy $f(r)>0$ in the entire spacetime region under consideration, which imposes a constraint on the regular part $f_{\text{regular}}$ in the central limit
\begin{equation}
	\lim_{r\to 0}f(r) 
	=\lim_{r\to 0}\left[ar^{\alpha}+f_{\text{regular}}\right] >0 
	\ \ \Rightarrow \ \ 
	\lim_{r\to 0}f_{\text{regular}} > 0 .
\end{equation}
In addition, the metric derivative diverges to positive infinity at the spacetime singularity $r \to 0$
\begin{equation}
	\lim_{r\to 0}\frac{df(r)}{dr}
	=\lim_{r\to 0}\left[a\alpha r^{\alpha-1}+\frac{df_{\text{regular}}}{dr}\right]
	=+\infty+\text{finite}
	=+\infty .
\end{equation}
which exactly meets the requirement for the class of naked singularity spacetimes considered in this subsection ($\lim_{r\to 0}\frac{df(r)}{dr}=+\infty$). Since we have assumed that $f(r) \cdot g(r)$ is non-vanishing, it is possible to take $h_0=\lim_{r\to 0}[g(r)\cdot f(r)]$ as the product of metric functions at the naked singularity. Hence, the geodesic curvature in the central limit becomes
\begin{eqnarray}
	\lim_{r\to 0}\kappa_{g}(r)
	&=& \lim_{r\to 0}\left[\frac{1}{\sqrt{f(r)\cdot g(r)}}\left(\frac{f(r)}{r}-\frac{1}{2}\frac{df(r)}{dr}\right)\right] \nonumber \\
	&=& \lim_{r\to 0}\left[ \frac{1}{\sqrt{h_0}} \cdot \left(  \frac{ar^{\alpha}+f_{\text{regular}}}{r} -\frac{a\alpha r^{\alpha-1}}{2} -\frac{1}{2}\frac{df_{\text{regular}}}{dr} \right) \right] \nonumber \\
	&=& \frac{1}{\sqrt{h_{0}}}\cdot\lim_{r\to 0}\frac{2a-a\alpha}{2r^{1-\alpha}} +\frac{1}{\sqrt{h_0}}\cdot\lim_{r\to 0}\frac{f_{\text{regular}}}{r} -\frac{1}{2\sqrt{h_0}}\cdot\lim_{r\to 0}\frac{df_{\text{regular}}}{dr} \nonumber \\
	&=& +\infty +\infty - \text{finite} \nonumber \\
	&=& +\infty		
\end{eqnarray}
Therefore, we can draw the following conclusion. In naked singularity spacetimes with $f(r)=ar^n+f_{\text{regular}}$ behavior, the geodesic curvature satisfies $\lim_{r\to 0}\kappa_{g}(r)=+\infty$ at the central limit and $\kappa_{g}(r) \to 0^+$ (or $\kappa_{g}(r) > 0$) at the outer boundary (see Appendix \ref{appendix A}), indicating that the equation $\kappa_g(r) = 0$ admits either no solution or an even number of solutions, corresponding to the appearance of an even number of circular photon orbits $N=2k$ ($k \in \mathbb{N}$). Furthermore, from the alternating distribution property of stable and unstable circular photon orbits (which is presented in Appendix \ref{appendix B}), the numbers of stable and unstable circular photon orbits are both equal to $k$ in this case. The topological invariant associated with circular photon orbits becomes $w=n_{\text{stable}} - n_{\text{unstable}} = 0$.

\textbf{Case II: Spacetimes with $f(r)=\frac{a}{(\ln r)^{2n}}+f_{\text{regular}}$ and $f(r)=\frac{a}{(\ln r)^{2n+1}}+f_{\text{regular}}$ behavior near the naked singularity.} This scenario encompasses two analogous metric behaviors in the vicinity of naked singularity: $f(r)=\frac{a}{(\ln r)^{2n}}+f_{\text{regular}}$ and $f(r)=\frac{a}{(\ln r)^{2n+1}}+f_{\text{regular}}$. We first consider the metric $f(r)=\frac{a}{(\ln r)^{2n}}+f_{\text{regular}}$, with $a>0$, $n\in \mathbb{N}^+$ and $f_{\text{regular}}$ to be a non-divergent function. The assumption that $f(r)>0$ is kept in the spacetime region makes the regular part $f_{\text{regular}}$ to be positive at the naked singularity 
\begin{equation}
	\lim_{r\to 0}f(r)
	=\left[\frac{a}{(\ln r)^{2n}}+f_{\text{regular}}\right]>0
	\ \ \Rightarrow \ \ 
	\lim_{r\to 0}f_{\text{regular}} > 0 .
\end{equation}
Meanwhile, the first-order metric derivative diverges to positive infinity in the central limit 
\begin{equation}
	\lim_{r\to 0}\frac{df(r)}{dr}
	=\left[\lim_{r\to 0}\frac{-2an}{r(\ln r)^{2n+1}}+\frac{df_{\text{regular}}}{dr}\right]
	=+\infty+\text{finite}
	=+\infty.
\end{equation}
Let $h_0=\lim_{r\to 0}[g(r) \cdot f(r)]$ denotes the product of metric functions at the naked singularity, the geodesic curvature $\kappa_{g}(r)$ in the central limit becomes
\begin{eqnarray}
	\lim_{r\to 0}\kappa_{g}(r)
	&=& \lim_{r\to 0}\left[\frac{1}{\sqrt{f(r)\cdot g(r)}}\left(\frac{f(r)}{r}-\frac{1}{2}\frac{df(r)}{dr}\right)\right]\notag \nonumber \\
	&=& \lim_{r\to 0} \left\{ \frac{1}{\sqrt{h_0}} \left[\frac{\frac{a}{(\ln r)^{2n}}+f_{\text{regular}}}{r}+\frac{an}{r(\ln r)^{2n+1}}-\frac{1}{2}\frac{df_{\text{regular}}}{dr} \right] \right\}\notag \nonumber \\
	&=& \frac{1}{\sqrt{h_0}}\cdot\lim_{r\to 0}\frac{a(n+\ln r)}{r(\ln r)^{2n+1}} 
	+\frac{1}{\sqrt{h_0}}\cdot\lim_{r\to 0}\frac{f_{\text{regular}}}{r} 
	-\frac{1}{2\sqrt{h_0}}\cdot\lim_{r\to 0}\frac{df_{\text{regular}}}{dr} \nonumber \\
	&=& +\infty	+\infty -\text{finite}	\nonumber \\
	&=& +\infty	.			
\end{eqnarray}

Similarly, suppose the metric behaves as $f(r)=\frac{a}{(\ln r)^{2n+1}}+f_{\text{regular}}$ near the naked singularity. In this case, the coefficient differs from that in previous case, where it is required that $a<0$ and $n\in \mathbb{N}^+$ to ensure $f(r)>0$. This leads the regular part $f_{\text{regular}}$ to be positive in the naked singularity 
\begin{equation}
	\lim_{r\to 0}f(r)
	=\left[\frac{a}{(\ln r)^{2n+1}}+f_{\text{regular}}\right]>0
	\ \ \Rightarrow \ \ 
	\lim_{r\to 0}f_{\text{regular}} > 0 .
\end{equation}
Meanwhile, the first-order metric derivative diverges to positive infinity at the central limit 
\begin{equation}
	\lim_{r\to 0}\frac{df(r)}{dr}
	=\lim_{r\to 0}\left[\frac{-(2n+1)a}{r(\ln r)^{2n+2}}+\frac{df_{\text{regular}}}{dr}\right]
	=+\infty+\text{finite}
	=+\infty.		
\end{equation}
Under this circumstance, the geodesic curvature $\kappa_{g}(r)$ at the central limit becomes
\begin{eqnarray}
	\lim_{r\to 0}\kappa_{g}(r)
	&=& \lim_{r\to 0}\left[\frac{1}{\sqrt{f(r)\cdot g(r)}}\left(\frac{f(r)}{r}-\frac{1}{2}\frac{df(r)}{dr}\right)\right] \nonumber \\
	&=& \lim_{r\to 0} \left\{\frac{1}{\sqrt{h_0}}\left[\frac{\frac{a}{(\ln r)^{2n+1}}+f_{\text{regular}}}{r}+\frac{(2n+1)a}{2r(\ln r)^{2n+2}}-\frac{1}{2}\frac{df_{\text{regular}}}{dr}\right] \right\} \nonumber \\
	&=& \frac{1}{\sqrt{h_0}}\cdot\lim_{r\to 0}\frac{a(2n+1+2\ln r)}{2r(\ln r)^{2n+2}} 
	+\frac{1}{\sqrt{h_0}}\cdot\lim_{r\to 0}\frac{f_{\text{regular}}}{r} 
	-\frac{1}{2\sqrt{h_0}}\cdot\lim_{r\to 0}\frac{df_{\text{regular}}}{dr} \nonumber \\
	&=& +\infty	+\infty -\text{finite}	\nonumber \\
	&=& +\infty	.
\end{eqnarray}
It can be clearly observed that the two metric behaviors adopted in case II lead to the same conclusions. When naked singularity spacetime equipped with metric $f(r)=\frac{a}{(\ln r)^{2n}}+f_{\text{regular}}$ and $f(r)=\frac{a}{(\ln r)^{2n+1}}+f_{\text{regular}}$, the geodesic curvature  $\kappa_g(r)$ at the central limit satisfies $\lim_{r\to 0}\kappa_{g}(r)=+\infty$. Combined with the fact that the geodesic curvature satisfies $\kappa_{g}(r) \to 0^+$ or $\kappa_{g}(r)>0$ at the outer boundary (see Appendix \ref{appendix A}), the equation $\kappa_{g}(r)=0$ either has no solution or admits an even number of solutions. In such cases, there exists a total number of $N=2k$ circular photon orbits, with $k$ stable orbits and $k$ unstable circular orbits (according to the alternating distribution property of stable and unstable circular photon orbits in Appendix \ref{appendix B}). The topological invariant associated with circular photon orbits becomes $w=n_{\text{stable}} - n_{\text{unstable}} = 0$.

\textbf{Case III: Spacetime with $f(r)=a\left(\ln \frac{b}{r}\right)^{-\alpha}+f_{\text{regular}}$ behavior near the naked singularity.} In this part, we consider the metric given by $f(r)=a\left(\ln \frac{b}{r}\right)^{-\alpha}+f_{\text{regular}}$ with $0<\alpha<1$, $a>0$, $b>0$. Since it is required that $f(r)>0$ must be kept everywhere in the spacetime region, it follows that the regular part $f_{\text{regular}}$ should be positive at the naked singularity
\begin{equation}
	\lim_{r\to 0}f(r) 
	= \lim_{r\to 0}\left[a\left(\ln \frac{b}{r}\right)^{-\alpha}+f_{\text{regular}}\right] 
	\ \ \Rightarrow \ \ 
	\lim_{r\to 0}f_{\text{regular}} > 0 .
\end{equation}
Moreover, the first-order derivative of the metric diverges to positive infinity at the central limit
\begin{equation}
	\lim_{r\to 0}\frac{df(r)}{dr}
	= \lim_{r\to 0}\left[\frac{a\alpha}{r \left(\ln\frac{b}{r}\right)^{1+\alpha}} +\frac{df_{\text{regular}}}{dr}\right] 
	= +\infty +\text{finite} 
	= +\infty ,
\end{equation}
satisfying the classification of naked singularity spacetimes discussed in this subsection ($\lim_{r\to 0}\frac{df(r)}{dr}=+\infty$). Let $h_0=\lim_{r\to 0}[g(r) \cdot f(r)]$ denotes the product of metric functions at the naked singularity, then the geodesic curvature $\kappa_{g}(r)$ at the central limit becomes
\begin{eqnarray}
    \lim_{r\to 0}\kappa_{g}(r)
    &=& \lim_{r\to 0}\left[\frac{1}{\sqrt{f(r)\cdot g(r)}}\left(\frac{f(r)}{r}-\frac{1}{2}\frac{df(r)}{dr}\right)\right] \nonumber \\
    &=& \lim_{r\to 0} \left\{ \frac{1}{\sqrt{h_0}} \left[ \frac{a\left(\ln\frac{b}{r}\right)^{-\alpha}+f_{\text{regular}}}{r} -\frac{a\alpha}{2r\left(\ln\frac{b}{r}\right)^{1+\alpha}} -\frac{1}{2}\frac{df_{\text{regular}}}{dr} \right] \right\} \nonumber\\
    &=& \frac{1}{\sqrt{h_0}}\cdot\lim_{r\to 0}\frac{2a(\ln\frac{b}{r})-a\alpha}{2r \left(\ln\frac{b}{r}\right)^{1+\alpha}} 
    +\frac{1}{\sqrt{h_0}}\cdot\lim_{r\to 0}\frac{f_{\text{regular}}}{r} 
    -\frac{1}{2\sqrt{h_0}}\cdot\lim_{r\to 0}\frac{df_{\text{regular}}}{dr} \nonumber \\
    &=& +\infty +\infty - \text{finite} \nonumber \\
    &=& +\infty
\end{eqnarray}
It can be observed from the above choice of metric component $f(r)=a(\ln \frac{b}{r})^{-\alpha}+f_{\text{regular}}$ that the geodesic curvature diverges to positive infinity in the central limit ($\lim_{r\to 0}\kappa_{g}(r)=+\infty$). In addition, the geodesic curvature satisfies $\kappa_{g}(r) \to 0^+$ or $\kappa_{g}(r) >0$ at the outer boundary (see Appendix \ref{appendix A}). Subject to such conditions, the equation $\kappa_g(r)=0$ admits no roots or an even number of roots, which corresponds to $N=2k$ circular photon orbits in total. Furthermore, owing to the alternating distribution property of circular photon orbits, stable and unstable circular photon orbits are equal in number, with each type consisting of $k$ orbits. Consequently, the topological invariant associated with circular photon orbits becomes $w=n_{\text{stable}} - n_{\text{unstable}} = 0$.

\textbf{Summary and Conjecture.}
When examining the aforementioned representative cases, one can easily observed that the existence and number of circular photon orbits exhibits the same properties in these examples. For all three cases, in which the adopted metric functions satisfy $f(r)>0$ everywhere and their first derivative diverges to positive infinity $\lim_{r\to 0}\frac{df(r)}{dr}=+\infty$, the geodesic curvature tends to a positive value at the center, specifically $\lim_{r\to 0}\kappa_g(r)=+\infty$. Combined with the condition that the geodesic curvature in the outer boundary obeys $\kappa_{g}(r) \to 0^+$ or $\kappa_{g}(r) > 0$, the geodesic constraint equation $\kappa_g(r)=0$ admits no solutions or an even number of solutions. The total number of circular photon orbits is even $N=2k$, comprising $k$ unstable and $k$ stable circular photon orbits. Therefore, it is reasonable to guess that the occurrence of an even number of circular photon orbits may hold for a broader class of metric functions $f(r)$ (provided the metric derivative satisfies $\lim_{r \to 0}\frac{df(r)}{dr}=+\infty$), yielding a general conclusion for naked singularity spacetimes. In this way, the following conjecture can be proposed:
\begin{conjecture} \label{conjecture 1}
	Assuming the metric derivative for naked singularity spacetimes satisfies $\lim_{r\to 0}\frac{df(r)}{dr}=+\infty$ at the singularity point, the total number of circular photon orbits is even $N=2k$ (where $k\in \mathbb{N}$), and the number of stable and unstable circular orbits are both equal to $k$.
\end{conjecture}

\subsection{Proof of the Conjecture \ref{conjecture 1} in Subsection \ref{sec3.2}}\label{sec3.3}

Following our geometric method for circular photon orbits, the validity of the conjecture \ref{conjecture 1} imposes specific requirements on the behavior of the geodesic curvature near the naked singularity. Given the outer boundary constraint that the geodesic curvature obeys $\kappa_{g}(r) \to 0^+$ or $\kappa_{g}(r) > 0$ for asymptotically flat, asymptotically de Sitter and anti-de Sitter spacetimes, it is required that the geodesic curvature $\kappa_{g}(r)>0$ must hold within a neighborhood of the naked singularity $r=0$. Based on this observation, we state the following proposition and provide its rigorous mathematical proof.

\begin{proposition} \label{propositin 1}
Suppose the metric function $f(r)$ for a naked singularity spacetime is smooth and positive for $r>0$, with its derivative satisfying $\lim_{r\to 0}\frac{df(r)}{dr}=+\infty$. Then there exists a sufficiently small quantity $\delta>0$ such that $\kappa_{g}(r)>0$ holds for all $0 < r < \delta$ in the neighborhood of the center.
\end{proposition}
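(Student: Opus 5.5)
The plan is to reduce the claim to a sign condition on an elementary expression in $f$ and then control it through the integral representation of $f$ near the centre. By \eqref{eq:12}, and since $f(r)>0$ and $g(r)>0$ for $r>0$, the prefactor $1/\sqrt{f(r)g(r)}$ is strictly positive. Hence $\operatorname{sign}\kappa_g(r)=\operatorname{sign}F(r)$ with
\begin{equation*}
F(r)=\frac{f(r)}{r}-\frac12\frac{df(r)}{dr}=\frac{1}{2r}\Big[2f(r)-r f'(r)\Big].
\end{equation*}
It therefore suffices to find $\delta>0$ such that $2f(r)>r f'(r)$ for all $0<r<\delta$.

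The first step uses $\lim_{r\to0}f'(r)=+\infty$. It gives $\delta_1>0$ with $f'(r)>1$ on $(0,\delta_1)$, so $f$ is strictly increasing there. Since $f>0$, the limit $L=\lim_{r\to0^+}f(r)\ge 0$ exists and is finite. Consequently $f'$ is integrable on $(0,r]$ for $r<\delta_1$, and
\begin{equation*}
f(r)=L+\int_0^r f'(s)\,ds .
\end{equation*}

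The second step compares this integral with $r f'(r)$. If $f'$ is non-increasing on some interval $(0,\delta)$, which is the natural behaviour of a derivative blowing up at $r=0$ and holds in Cases I–III, then $f'(s)\ge f'(r)$ for $s\le r$. This gives $\int_0^r f'(s)\,ds\ge r f'(r)$, so
\begin{equation*}
2f(r)-rf'(r)\ \ge\ 2L+r f'(r)\ >\ 0 \qquad (0<r<\delta),
\end{equation*}
and hence $\kappa_g>0$ there. A weaker hypothesis would also do, for instance $L>0$ together with $r f'(r)\to0$, or more generally $\limsup_{r\to0} r f'(r)/f(r)<2$.

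The main obstacle is exactly this step. Smoothness together with $f'\to+\infty$ alone does not force $r f'(r)$ to stay below $2f(r)$. One can superimpose on $f'(r)=r^{-1/2}$ narrow smooth positive spikes at points $r_n\to0$, of height about $r_n^{-2}$ and width about $r_n^{3}$. These spikes keep $f'$ integrable, so $f$ stays bounded and $f'\to+\infty$ still holds, yet $r_nf'(r_n)\sim r_n^{-1}\to\infty$, so $F(r_n)<0$ infinitely often near $0$. I would therefore first try to extract eventual monotonicity of $f'$ from the setting (for example, assuming $f$ has a Puiseux or log-type expansion as in Cases I–III, or that $f''<0$ near $0$). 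Failing that, I would state the proposition with this mild regularity hypothesis added explicitly, after which the argument above closes the proof.
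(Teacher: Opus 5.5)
Your reduction to the sign of $2f(r)-rf'(r)$ is right, and so is your diagnosis: the spike construction is a valid counterexample, so the proposition is false as stated. Take $r_n=2^{-n}$ and $f'(r)=r^{-1/2}+\sum_n\phi_n(r)$, where the $\phi_n$ are smooth bumps of height $r_n^{-2}$ and width $r_n^{3}$ with disjoint supports, each contributing $O(r_n)$ to $\int_0^1 f'$. Put $f(r)=1+\int_0^r f'$. Then $f$ is smooth, $1\le f\le C$ and $f'\to+\infty$. Yet $2f(r_n)-r_nf'(r_n)\le 2C-r_n^{-1}<0$ for large $n$, while $2f-rf'\ge 2-\sqrt r>0$ between the bumps. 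So $\kappa_g$ changes sign infinitely often, and circular orbits accumulate at $r=0$; even the finite count $N=2k$ fails.

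The paper's proof does not avoid this, because it has exactly the gap your example exploits. It argues by contradiction but takes the negation of the conclusion to be ``$\kappa_g<0$ on all of $(0,\delta)$''. The correct negation only gives a sequence $r_n\to0$ with $\kappa_g(r_n)\le0$. Its key steps need the sign of $rf'-f$ on a whole interval, and information along a sequence does not supply it. Those steps are: $m=f/r$ is monotone on $(0,\delta)$, hence has a finite limit $L$, hence $f'(\eta_r)=2m(2r)-m(r)\to L$ by the mean value theorem. What the argument legitimately shows is only that $rf'-f>0$ cannot hold throughout any $(0,\delta)$. That gives points $r_n\to0$ with $r_nf'(r_n)\le f(r_n)$, hence $\kappa_g(r_n)>0$, which is too weak for the even-root counting.

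Your direct route via $f(r)=L+\int_0^r f'$ is a genuine repair rather than an alternative proof, and it is correct. With $f'$ non-increasing near $0$ you get $2f-rf'\ge 2L+rf'>0$. With $L>0$ and $rf'\to0$ you get $2f-rf'\to 2L>0$.

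When you check Cases I--III, use the second hypothesis. There $rf'(r)\to0$ (e.g.\ $rf'=a\alpha r^{\alpha}+rf_{\text{regular}}'$ in Case I, and similarly for the logarithmic cases), and $L=\lim_{r\to0}f_{\text{regular}}>0$, using only what the paper assumes. Eventual monotonicity of $f'$, by contrast, would need a bound on $f_{\text{regular}}''$, which the paper never imposes.
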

\begin{proof}
We proceed with the validation by contradiction. Suppose there exists an arbitrarily small $\delta>0$ such that the metric function $f(r)$ satisfies $\kappa_{g}(r)<0$ for all $r \in (0,\delta)$. From the expression for $\kappa_{g}(r)$ given in \eqref{eq:12}, it is clearly seen that $\kappa_{g}(r)<0$ implies
\begin{equation}
\frac{f(r)}{r}-\frac{1}{2}\frac{df(r)}{dr}<0 
\ \ 
\Rightarrow
\ \
\frac{f(r)}{r}<\frac{1}{2}\frac{df(r)}{dr}<\frac{df(r)}{dr} 
\ \
\Rightarrow
\ \
r\frac{df(r)}{dr}-f(r)>0.
\end{equation}
Now define an auxiliary function $m(r)=\frac{f(r)}{r}$ for $r \in (0,\delta)$. Then we have 
\begin{equation}
\frac{dm(r)}{dr}=\frac{r\frac{df(r)}{dr}-f(r)}{r^2}>0 \label{eq:4.3}
\end{equation}
On the other hand, for spacetimes containing naked singularities without event horizons, we always assume that $f(r) > 0$ is kept for $r>0$, so it follows that $0$ is a lower bound for $m(r)$. Choose a fixed point $r_0=\frac{\delta}{2}$. From equation (\ref{eq:4.3}), since $m(r)$ is non-decreasing on the interval $(0,\delta)$, we obtain
\begin{equation}
0 < m(r) \le m(r_0)
\end{equation}
for any point $r$ satisfying $0<r<r_0$.
By the Monotone Bounded Convergence Theorem for functions \cite{tao2022analysis,mathematical2019}, the limit $L=\lim_{r\to 0^+}m(r)$ exists and satisfies
\begin{equation}
0 \le L \le m(r_0) < +\infty
\end{equation}
Hence, $L$ is a finite non-negative real number.
\\
For any sufficiently small $r$ such that $2r \in (0,\delta)$, the function $f(r)$ is continuous on the closed interval $[r,2r]$ and differentiable on the open interval $(r,2r)$. By the Lagrange Mean Value Theorem\cite{tao2022analysis,mathematical2019}, there must exist at least one point $\eta_r \in (r,2r)$ depending on $r$, such that
\begin{equation}
\frac{df(\eta_r)}{dr}=\frac{f(2r)-f(r)}{2r-r}=\frac{f(2r)-f(r)}{r}=\frac{2r \cdot m(2r)-r \cdot m(r)}{r}=2m(2r)-m(r). \label{eq:4.1}
\end{equation}
In the simplification of this expression, we have applied $f(r)=r \cdot m(r)$. Since $\eta_{r} \in (r,2r)$, it follows that $\eta_{r} \to 0^+$ as $r$ taking the limit $r \to 0^+$. Taking limits on both sides of equation (\ref{eq:4.1}), we obtain
\begin{equation}
\lim_{r \to 0^+}\frac{df(\eta_r)}{dr}=\lim_{r \to 0^+}[2m(2r)-m(r)] \label{eq:4.2}
\end{equation}
Since $\lim_{r\to 0^+}m(r)=L$, it follows that $\lim_{r\to 0^+}m(2r)=L$. Substituting this result into equation (\ref{eq:4.2}), we obtain
\begin{equation}
\lim_{r \to 0^+}\frac{df(\eta_r)}{dr}=2L-L=L.
\end{equation}
Then there exists a sequence of points $\{\eta_r \}$ converging to 0 such that the corresponding derivative values $\frac{df(\eta_r)}{dr}$ tend to a finite constant $L$. However, the assumption of proposition \ref{propositin 1} requires $\lim_{r \to 0^+}\frac{df(r)}{dr}=+\infty$, implying $\lim_{r \to 0}\frac{df(\eta_r)}{dr} = +\infty$ along all 
sequences ${\eta_r}$ converging to $0$. This leads to a contradiction. Consequently, the above assumption $(\kappa_{g}(r)<0)$ is false, and the proposition \ref{propositin 1} has been proved.
\end{proof}

Combining proposition \ref{propositin 1} with the constraint imposed on the geodesic curvature at the outer boundary ($\kappa_{g}(r)\to 0^+ $ or $\kappa_{g}(r)>0$), we eventually demonstrate that the geodesic curvature equation $\kappa_{g}(r)=0$ admits no solutions or an even number of solutions. This completes the proof of conjecture \ref{conjecture 1} presented in Subsection \ref{sec3.2}.

\section{Comparison of Circular Photon Orbits in Naked Singularity Spacetimes and Other Categories of Spacetimes} \label{sec Comparison}

\begin{figure*}[b]
	\centering
	\includegraphics[width=0.725\textwidth]{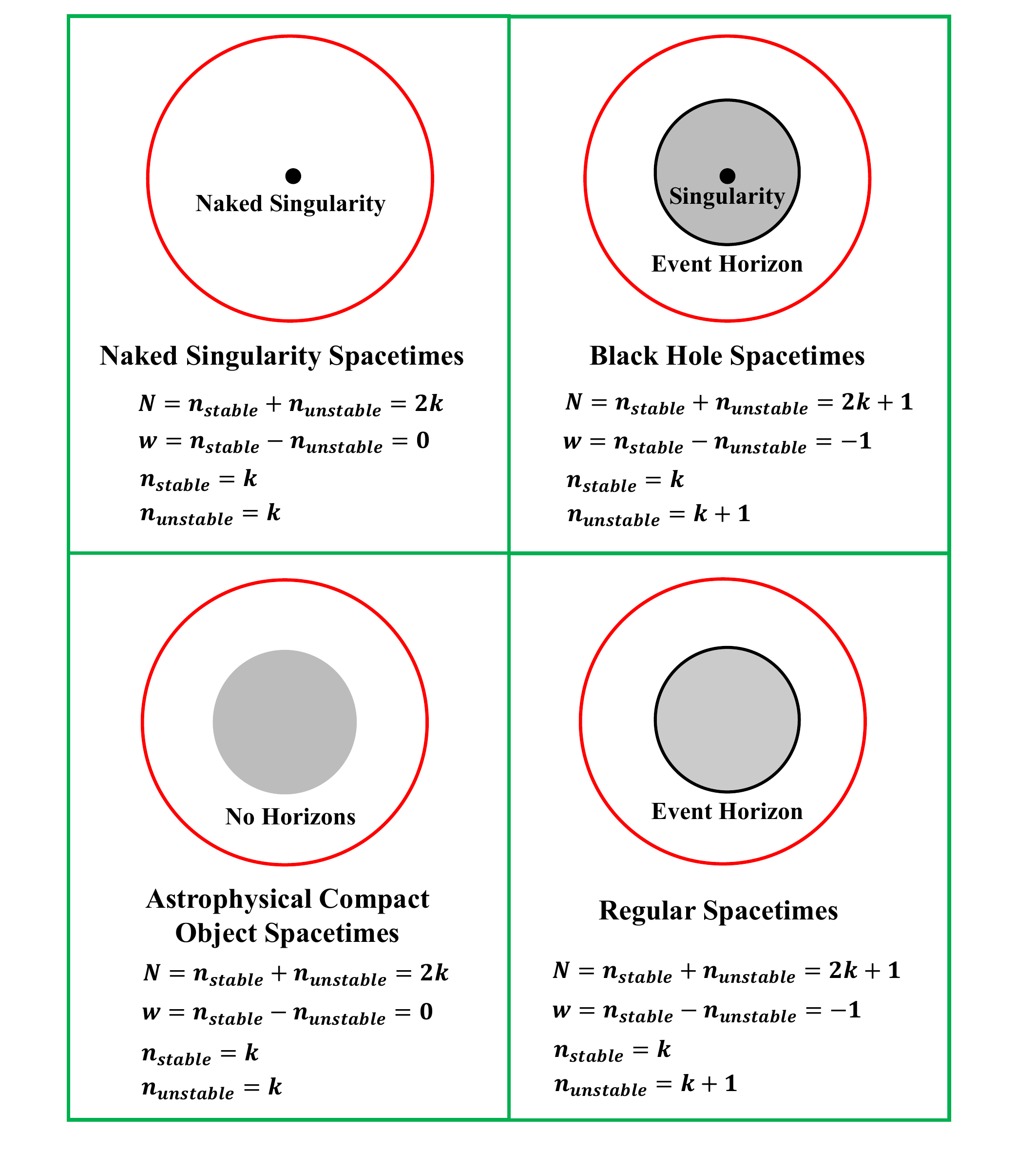}
	\caption{Comparison among the numbers of circular photon orbits in naked singularity spacetimes, black hole spacetimes, astrophysical compact object spacetimes, and regular spacetimes.}
	\label{fig:B}
\end{figure*}

In this work, we mainly focus on investigating circular photon orbits in naked singularity spacetimes. On the other hand, the properties of circular photon orbits in other categories of spacetimes (such as black hole spacetimes, astrophysical compact object spacetimes, regular spacetimes) have been extensively explored in recent studies. Therefore, it is highly valuable to provide a comparison between the features of circular photon orbits in different categories of spacetimes, which would enable us to identify the most dominant factors governing circular photon orbits in the gravitational field. Firstly, in black hole spacetimes possessing both singularities and event horizons, the total number of circular photon orbits is odd $N=2k+1$. Among these orbits, $k$ are stable circular photon orbits, and the remaining $k+1$ are unstable ones, such that the spacetime topological invariant associated with circular photon orbits satisfies $w=-1$ \cite{Cunha_2020,cunha2022null,Wei_2020,Qiao_2025a}. Secondly, in singularity-free and horizonless spacetimes produced by compact objects, the total number of circular photon orbits is even $N=2k$, comprising $k$ stable circular photon orbits and $k$ unstable ones. In this case, the spacetime topological invariant satisfies $w=0$ \cite{Cunha_2017,Qiao_2025b}. Thirdly, in regular spacetimes with event horizons but no singularities, the total number of circular photon orbits is an odd integer $N=2k+1$. There are $k$ stable circular photon orbits and $k+1$ unstable ones, and the spacetime topological invariant is $w=-1$ \cite{Qiao_2025b}. Finally, the discussions and derivations presented in Section \ref{sec PO Naked Singularity} suggest that for naked singularity spacetimes containing singularities but devoid of event horizons, the total number of circular photon orbits is an even integer $N=2k$. Specifically, there exist $k$ stable circular photon orbits and $k$ unstable circular photon orbits, and the spacetime topological invariant satisfies $w=0$.

Comparing the circular photon orbits in the naked singularity spacetimes with those in black hole spacetimes, astrophysical compact object spacetimes, and regular spacetimes reported in the literature, we can obtain the following conclusions. For horizon-equipped black hole spacetimes and regular spacetimes, the total number of circular photon orbits is odd ($N=2k+1$). Given an invariant horizon structure, the presence or absence of internal singularities inside the event horizons does not influence the number of circular photon orbits. In contrast, horizonless spacetimes, including naked singularity and compact object spacetimes, correspond to an even number of circular photon orbits ($N=2k$). For horizonless geometries, the existence or nonexistence of naked singularities has no substantive effects on the existence and the number of circular photon orbits. This comparison indicates that the properties of circular photon geodesics are primarily governed by the presence of event horizons rather than spacetime singularities.  

The fact that the event horizon has a dominant impact on the numbers of circular photon orbits can be explained by the following reason. In a Lorentzian spacetime, event horizons are completely determined by the spacetime causal structure, which is a geometric and topological structure uniquely prescribed by light cones and photon trajectories within the spacetime. Accordingly, the presence or absence of event horizons dictates the characteristic features of circular photon orbits in the spacetime. By contract, the emergence or absence of spacetime singularities is not uniquely determined by causal structure, which also depends on additional requirements such as energy conditions and trapped surfaces \cite{hawking2023large,hawking1970,senoilla1998}. For this reason, spacetime singularities exert a weaker influence on circular photon orbits relative to event horizons.

\section{Conclusion and Prospects}\label{sec Conclusion}

The investigation of circular photon orbits is of great significance for gravitational theories, mathematical physics and astronomical observations. Naked singularity spacetimes constitute one of the exotic classes of spacetimes in general relativity and other alternative gravity theories, which may produce profound influence on photon orbits or causal structures. 

This work provides a comprehensive investigation into the features of circular photon orbits in naked singularity spacetimes with spherical symmetry. Specifically, our study mainly focuses on the existence of circular photon orbits, the total number of these circular orbits, the number of stable and unstable circular orbits. A geometric approach is employed in the current work, in which the framework of optical geometry together with its intrinsic geodesic and Gaussian curvature is adopted in the study circular photon orbits. Particularly, the existence of circular photon orbits is constrained via the geodesic curvature condition $\kappa_{g}(r)=0$, which is finally determined by analyzing the behavior of geodesic curvature at the naked singularity and at the outer boundary. We examine various classes of naked singularity spacetimes with different divergent behaviors of metric derivatives ($\lim_{r\to 0}\frac{df(r)}{dr} =\text{finite}$, $\lim_{r\to 0}\frac{df(r)}{dr}=-\infty$, $\lim_{r\to 0}\frac{df(r)}{dr}=+\infty$) and summarize the common features of circular photon orbits that are universally held for arbitrary naked singularity spacetimes. Eventually, we draw a universal conclusion on the properties of circular photon orbits for spherically symmetric naked singularity spacetimes. The total number of circular photon orbits is even ($N = 2k$), consisting of $k$ stable orbits and $k$ unstable orbits in equal proportion. A rigorous mathematical proof of this conclusion is also provided in our work. Furthermore, comparing our conclusions for circular photon orbits in naked singularity spacetimes and those in black hole spacetimes, regular spacetimes, compact celestial objects' spacetimes, the influences from the spacetime singularity and event horizon on circular photon orbits have been extracted. The results indicate that the event horizon exert stronger dominant influences on the count of circular photon orbits in the gravitational field, compared with the spacetime singularities. 

The present work carries out a comprehensive study on circular orbits of massless photons. Nevertheless, similar analyses can be extended to investigate other families of circular particle orbits in naked singularity spacetimes. In particular, the effect of the naked singularity on the innermost stable circular orbits (ISCO) is also critical and deserves an in-depth study. Such an analysis can be performed when the intrinsic curvatures in the Jacobi reference geometry are utilized.

\begin{appendices}

\section{Behavior of the Geodesic Curvature at the Outer Boundary
}\label{appendix A}

In this Appendix, we present the analysis on the behavior of geodesic curvature in optical geometry at the outer boundary. It can be observed in the following part that the geodesic curvature at the outer boundary is independent of the existence of event horizons or spacetime singularities, and it is solely determined by the asymptotic behavior of the spacetime metric. In this work, we conduct the analysis for three most-common asymptotic spacetime configurations, which are asymptotically flat, asymptotically de Sitter, and asymptotically anti-de Sitter spacetimes. For asymptotically flat and asymptotically anti-de Sitter spacetimes, the outer boundary is located at the infinity ($r\to \infty$); for asymptotically de Sitter spacetime, the outer boundary is the cosmological horizon ($r\to r_{\text{cosmic}}$).

\textbf{Asymptotically Flat Spacetimes:} As the radial coordinate approaches the infinity $r\to \infty$, the spacetime metric of the asymptotically flat spacetime has the following asymptotic expansion.
\begin{subequations}
\begin{eqnarray}
	\lim_{r\to \infty}f(r)&=&1+\frac{a_i}{r^i}+O\left(\frac{1}{r^{i+1}}\right)\qquad(i\ge 1)\\
	\lim_{r\to \infty}g(r)&=&\frac{1}{1+\frac{b_j}{r^j}+O\left(\frac{1}{r^{j+1}}\right)}\qquad(j\ge 1)		
\end{eqnarray}
\end{subequations}
where $a_i$ and $b_j$ are expansion coefficients. Utilizing the above asymptotic expansion of the spacetime metric at infinity, the geodesic curvature $\kappa_{g}(r)$ at the outer boundary is calculated as:
\begin{eqnarray}
	\lim_{r\to \infty}\kappa_g(r)&=&\lim_{r\to \infty}\left[\frac{1}{\sqrt{f(r)\cdot g(r)}}\left(\frac{f(r)}{r}-\frac{1}{2}\frac{df(r)}{dr}\right)\right]\notag\\
	&=&\lim_{r\to \infty}\frac{\frac{1}{r}+\frac{a_i}{r^{i+1}}+\frac{ia_i}{2r^{i+1}}+O\left(\frac{1}{r^{i+2}}\right)}{\sqrt{1+\frac{\frac{a_i}{r^i}-\frac{b_j}{r^j}+O\left(\frac{1}{r^{i+1}},\frac{1}{r^{j+1}}\right)}{1+\frac{b_j}{r^j}+O\left(\frac{1}{r^{j+1}}\right)}}} \nonumber\\
	&=&\lim_{r\to \infty}\frac{1}{r} \nonumber\\
	&=&0^+			
\end{eqnarray}
The result indicates that the geodesic curvature tends to zero strictly from the positive side ($\kappa_{g}(r)>0$) as $r$ approaches infinity. This is a property of geodesic curvature that is universally held for naked singularity spacetimes with asymptotic flatness.

\textbf{Asymptotically Anti-de Sitter Spacetimes:} The spacetime metrics of asymptotically anti-de Sitter spacetimes admit the following asymptotic expansions in the limit of ${r\to \infty}$
\begin{subequations}
\begin{eqnarray}
	\lim_{r\to \infty}f(r)&=&1-\frac{\Lambda r^2}{3}+\frac{a_i}{r^i}+O\left(\frac{1}{r^{i+1}}\right)\qquad(i\ge 1)\\
	\lim_{r\to \infty}g(r)&=&\frac{1}{1-\frac{\Lambda r^2}{3}+\frac{b_j}{r^j}+O\left(\frac{1}{r^{j+1}}\right)}\qquad(j\ge 1)		
\end{eqnarray}
\end{subequations}
where $a_i$ and $b_j$ are expansion coefficients, and $\Lambda$ represents the cosmological constant. For asymptotically anti-de Sitter spacetimes, the cosmological constant satisfies $\Lambda<0$. Adopting the above asymptotic expansion of the spacetime metric at infinity, the behavior of the geodesic curvature $\kappa_{g}(r)$ at the outer boundary is expressed as:
\begin{eqnarray}
	\lim_{r\to \infty}\kappa_g(r)&=&\lim_{r\to \infty}\left[\frac{1}{\sqrt{f(r)\cdot g(r)}}\left(\frac{f(r)}{r}-\frac{1}{2}\frac{df(r)}{dr}\right)\right]\notag\\
	&=&\lim_{r\to \infty}\frac{\frac{1}{r}+\frac{a_i}{r^{i+1}}+\frac{ia_i}{2r^{i+1}}+O\left(\frac{1}{r^{i+2}}\right)}{\sqrt{1+\frac{\frac{a_i}{r^i}-\frac{b_j}{r^j}+O\left(\frac{1}{r^{i+1}},\frac{1}{r^{j+1}}\right)}{1-\frac{\Lambda r^2}{3}+\frac{b_j}{r^j}+O\left(\frac{1}{r^{j+1}}\right)}}} \nonumber\\
	&=&\lim_{r\to \infty}\frac{1}{r} \nonumber\\
	&=&0^+			
\end{eqnarray}
The results show that the geodesic curvature satisfies $\kappa_{g}(r)\to 0$ strictly from the positive side ($\kappa_{g}(r)>0$) at infinity, which is the same as the conclusion obtained for asymptotically flat spacetimes.
 
\textbf{Asymptotically de Sitter Spacetimes:} Due to the positive cosmological constant $\Lambda>0$, the asymptotically de Sitter spacetime has a cosmological horizon $r_\text{cosmic}$, such that the physical radial coordinate obeys $r \in (0,r_\text{cosmic})$. Furthermore, the metric function $f(r)$ decreases monotonically to zero as $r$ approaches the cosmological horizon (since the spacetime metric satisfies $f(r)=0$ at $r=r_\text{cosmic}$ and $f(r)>0$ when $r<r_\text{cosmic}$), which yields the corresponding results at the outer boundary 
\begin{equation}
    f(r_\text{cosmic})=0,
    \ \ \ \ \ 
    \left. \frac{df(r)}{dr}\right|_{r=r_\text{cosmic}}<0.		
\end{equation}
Assuming that $f(r)\cdot g(r)$ does not vanish as the radial coordinate approaches the cosmological horizon $r\to r_{\text{cosmic}}$ \footnote{The requirement that $f(r)\cdot g(r)$ does not vanish indicates that the spacetime volume $dV = \sqrt{|\text{det}(g_{\mu\nu})|} d^4 x$ remains well-defined and non-degenerate as $r$ approaches the cosmological horizon $r\to r_{\text{cosmic}}$.}, the geodesic curvature at this outer boundary becomes 
\begin{equation}
    \lim_{r\to r_\text{cosmic}}\kappa_g(r)
    = \lim_{r\to r_\text{cosmic}}\left[\frac{1}{\sqrt{f(r)\cdot g(r)}}\left(\frac{f(r)}{r}-\frac{1}{2}\frac{df(r)}{dr}\right)\right]
    >0			
\end{equation}
This result indicates that, in asymptotically de Sitter spacetimes, the geodesic curvature always get a positive value $\kappa_{g}(r)>0$ at the outer boundary $r\to r_{\text{cosmic}}$.
 
Combining the discussions in this appendix, we can draw the following conclusion. For the most general asymptotic behavior of the Lorentzian spacetimes (asymptotically flat, asymptotically de Sitter, asymptotically anti-de Sitter), the geodesic curvature at the outer boundary exhibits $\kappa_{g}(r) \to 0^{+}$ or $\kappa_{g}(r)>0$.

\section{Alternating Distribution Property for Circular Photon Orbits} \label{appendix B}

In the gravitational field, the number of stable and unstable circular photon orbits is intrinsically linked to their underlying distribution property. Specifically, the stable and unstable circular photon orbits in spherically symmetric spacetimes must satisfy the alternating distribution property summarized below
\begin{quote} 
\textbf {Alternating Distribution Property of Circular Photon Orbits}: Stable and unstable circular photon orbits follow a strict alternating distribution. Specifically, there exists precisely one unstable circular photon orbit sandwiched between two adjacent stable circular photon orbits; likewise, exactly one stable circular photon orbit lies between any two neighbouring unstable circular photon orbits.
\end{quote}
This property for circular photon orbits can be rigorously derived mathematically. Particularly, reference \cite{Qiao_2025a} provides a derivation based on the Gauss-Bonnet theorem, which was originally performed for black hole spacetimes. However, one can see that the detailed procedure of derivation given in reference \cite{Qiao_2025a} does not depend on the assumptions concerning spacetime singularities or event horizons. Consequently, this alternating distribution property also holds for naked singularity spacetimes investigated in the present work (see the illustration in figure \ref{fig:C}).

\begin{figure*}[h]
	\centering
	\includegraphics[width=0.725\textwidth]{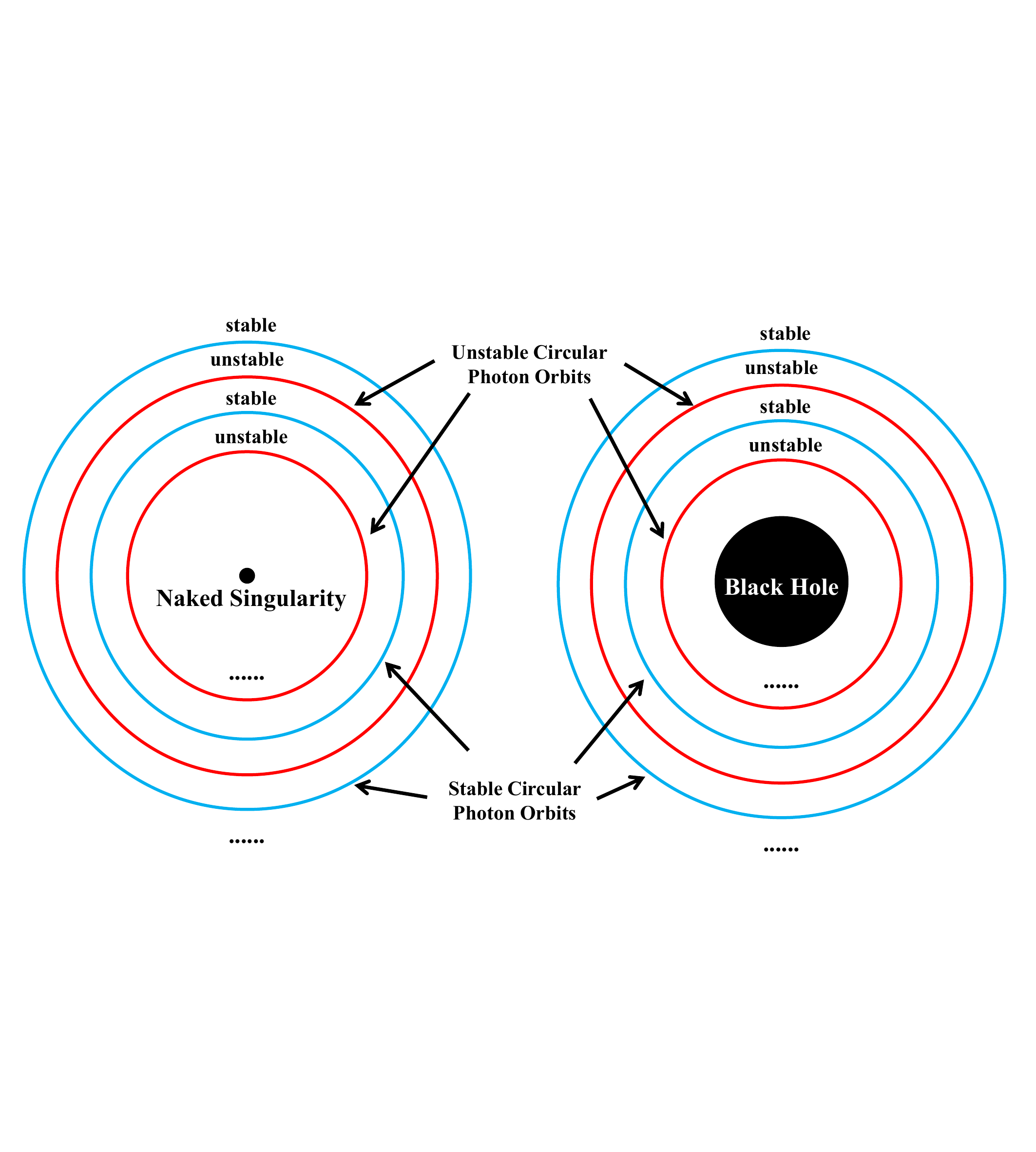}
	\caption{Illustration of the alternating distribution for stable and unstable circular photon orbits in black hole spacetimes and naked singularity spacetimes.}
	\label{fig:C}
\end{figure*}

\end{appendices}

\section*{Acknowledgements}

The authors thank Jun-Hua Wu for helpful discussions and comments on the manuscript. This work is supported by the Natural Science Foundation of Chongqing Municipality (Grant No. CSTB2022NSCQ-MSX0932), and the Scientific and Technological Research Program of Chongqing Municipal Education Commission (Grant No. KJQN202201126).

\nocite{*}
\bibliography{./ref}





\end{document}